\documentclass[reqno]{amsart}
\usepackage{graphicx} 
\usepackage{amsaddr}

\usepackage[dvipsnames]{xcolor}

\usepackage{float}
\usepackage[shortlabels]{enumitem}
\usepackage[numbers,sort&compress]{natbib}
\usepackage{braket}
\usepackage{amsthm}
\usepackage{mathtools}
\usepackage{url}
\usepackage{physics}
\usepackage{graphicx}
\usepackage[left=24mm,right=24mm,top=35mm]{geometry} 
\usepackage[T1]{fontenc}
\usepackage{bm}
\usepackage{tabularx}
\newcommand*{\eh}{\mathrm{End\ }\mathcal{H}}
\newcommand*{\Ad}{\mathrm{Ad}}

\def\ad{^{\dagger}}

\newcommand{\fsnull}[1]{}
\newcommand{\old}[1]{}

\usepackage[makeroom]{cancel}
\usepackage[toc,page]{appendix}
\definecolor{C1}{RGB}{52, 89, 149}
\definecolor{C2}{RGB}{251, 77, 61}
\definecolor{C3}{RGB}{3, 206, 164}
\definecolor{C4}{RGB}{202, 21, 81}
\definecolor{C5}{RGB}{180,77,155}
\usepackage{hyperref}
\hypersetup{colorlinks=true, linkcolor=C5, citecolor=C5, urlcolor=C5}

\usepackage{tikz}
\tikzset{every picture/.style=remember picture}

\usepackage{pgfplots}

\usepackage[utf8]{inputenc}
\usepackage{graphicx}
\usepackage{xcolor}
\usepackage{amsmath}
\usepackage{amsthm}
\usepackage{bm}
\usepackage{bbm}
\usepackage{comment}
\usepackage{appendix}
\usepackage{mathdots}
\usepackage{lipsum}
\usepackage{verbatim}
\usepackage{nccmath}
\usepackage{amsfonts}
\usepackage{thm-restate}
\usepackage{thmtools}
\usepackage{capt-of}
\usepackage{booktabs}

\newcounter{graph}
\renewcommand{\thegraph}{\Alph{graph}}

\usepackage{amssymb}
\usepackage{dsfont}

\renewcommand{\geq}{\geqslant}
\renewcommand{\leq}{\leqslant}

\newcommand{\ot}{\otimes}
\newcommand{\ts}{^{\otimes 2}}

\newcommand{\bs}{\textsf{BS}}

\newcommand{\lm}{\lambda }

\newcommand{\sg}{\sigma }

\newcommand{\om}{\omega }
\newcommand{\Om}{\Omega }

\tikzset{qubit/.style={circle,draw,thick,fill=blue!5,minimum size=7mm,inner sep=1pt},
  leaf/.style={font=\small,inner sep=2pt},
  spectator/.style={draw=orange!75!black,text=orange!65!black},
  edge/.style={thick},port/.style={font=\scriptsize,fill=white,inner sep=1pt}}

\usetikzlibrary{arrows.meta}
\DeclareMathOperator*{\expect}{\mathbb{E}}

\newcommand{\mcl}{\mathcal{L}}

\newcommand{\mco}{\mathcal{O}}

\newcommand{\mch}{\mathcal{H}}

\newcommand{\mce}{\mathcal{E}}

\newcommand{\mbso}{\mathbb{SO}}

\newcommand{\mbu}{\mathbb{U}}

\newcommand{\mbc}{\mathbb{C}}

\newcommand{\mst}{\mathsf{T}}

\def\be{\begin{equation}}
\def\ee{\end{equation}}
\def\bs{\begin{split}}
\def\e{\end{split}}
\def\ba{\begin{eqnarray}}
\def\bea{\begin{eqnarray}}

\def\tea{\end{eqnarray}}
\def\ea{\end{eqnarray}}
\def\eea{\end{eqnarray}}

\def\b{\beta}

\def\b{\beta}

\def\tn{^\otimes n}
\def\tk{^\otimes k}

\def\b{\beta}

\def\tn{^{\otimes n}}

\def\tk{^{\otimes k}}

\newcommand{\id}{\mathds{1}}

\renewcommand{\b}{\beta}

\newcommand{\sdket}[1]{| #1 \rangle\!\rangle}

\def\mg{\mathsf{MG}(n)}

\def\sg{\sigma}

\def\rtg{{\rm rt}(\mathsf{G})}

\def\be{\begin{equation}}
\def\te{\end{equation}}
\def\ee{\end{equation}}
\def\ba{\begin{eqnarray}}
\def\bea{\begin{eqnarray}}

\def\tea{\end{eqnarray}}
\def\ea{\end{eqnarray}}
\def\eea{\end{eqnarray}}

\pgfplotsset{compat=1.18}

\begin{document}

\bibliographystyle{IEEEtran}

\title[Strong matchgate designs in nearly optimal depth]{Strong matchgate designs in nearly optimal depth}

\author{\vspace{-5mm} Maxwell  West\,\textsuperscript{1,2},  M. Cerezo\,\textsuperscript{2,3} \MakeLowercase{and}  Mart\'{i}n Larocca\,\textsuperscript{1,2}}
\address{\textsuperscript{1}Theoretical Division, Los Alamos National Laboratory, Los Alamos, New Mexico 87545, USA}
\address{\textsuperscript{2}Quantum Science Center, Oak Ridge, TN 37931, USA}
\address{\textsuperscript{3}Information Sciences, Los Alamos National Laboratory, Los Alamos, New Mexico 87545, USA}

\begin{abstract}{Understanding the resources required to generate approximately random unitaries over various groups is a natural goal of quantum information theory. With respect to one notion of approximation, that of a \textit{design}, it is known that the full unitary group can be approximated in logarithmic depth by one-dimensional circuits of nearest-neighbour 2-local gates. On the other hand, remarkably, circuits with this  connectivity cannot form designs over the matchgate group in sublinear depth. Here we show that this dramatic slowdown can disappear when using a general qubit connectivity graph $\mathsf{G}$ of \textit{routing number ${\rm rt}(\mathsf{G})$}. Indeed, in this setting one  can obtain (strong) $\varepsilon$-approximate relative error matchgate $k$-designs in depth $\mco(k^2{\rm rt}(\mathsf{G})\log n\log(n/\varepsilon))$. For all-to-all connectivity, ${\rm rt}(\mathsf{G})=2$.  Our construction is conceptually simple, involving a random walk on the matchgate group, and no ancillae. As a technical byproduct, we improve upon the state of the art for \textit{fermionic routing}, obtaining an   $\mco({\rm rt}(\mathsf{G})\log n)$ depth router. Additionally, for $k=3$, we obtain an \textit{exact} strong matchgate design in  $\mco({\rm rt}(\mathsf{G})\log n)$ depth, again without ancillae. Under all-to-all connectivity, our fermionic router and 3-designs are optimal. Notably, our  results imply that  quantum algorithms  for fermionic tomography which require drawing from a matchgate 3-design may be exponentially sped up on quantum computers with all-to-all connectivity, relative to their strictly one-dimensional counterparts.
}
\end{abstract}

\maketitle

\tableofcontents

\section{Introduction}
Unitary designs, ensembles of unitaries which reproduce moments of the unitary Haar distribution, are fundamental to  quantum computation and information~\cite{schuster2024random,laracuente2024approximate,cui2025unitary,brandao2016local,haferkamp2022random,ma2024how,west2025no,grevink2025will,west2026ambient,dankert2009exact,harrow2009random,chen2024incompressibility,gross2007evenly,haah2025short,parella2026strong,west2024random,deneris2024exact}. They provide a framework within which to study the scrambling capacity of quantum systems~\cite{roberts2017chaos,cotler2017chaos,schuster2025strong,dowling2023scrambling,brown2012scrambling},   constitute important primitives in quantum algorithms~\cite{huang2020predicting,elben2022randomized,knill2008randomized,combes2017logical,helsen2022matchgate,magesan2011scalable}, and   even inform our understanding of phenomena in  quantum gravity~\cite{hayden2007black,sekino2008fast,lashkari2013towards}. Counterintuitively, it has been discovered in recent years that ensembles of one-dimensional circuits of nearest-neighbour 2-local gates with depth scaling just as $\mco(k\,{\rm polylog}\, (k) \log(n/\varepsilon)) $, can form $\varepsilon$-approximate unitary $k$-designs~\cite{schuster2024random,laracuente2024approximate}. On an   all-to-all connected lattice, this scaling can be improved to doubly logarithmic in $n$~\cite{cui2025unitary,du2026low}. \\

The notion of unitary designs  invites natural generalisation in several directions.  One such generalisation is to that of designs over other groups; in general, we shall say that an ensemble constitutes an (approximate) $k$-design over a compact group $G$ if its first $k$ moments coincide (approximately) with those of the Haar measure $\mu_G$ on $G$. There has recently been some interest in the complexity of constructing such group designs, where it has been found~\cite{west2025no,grevink2025will,west2026ambient} that a surprising difficulty often emerges. Indeed, and in contrast to the logarithmic depth at   which approximate designs can  form   in the   unitary case,   one dimensional sublinear-depth local nearest-neighbour ensembles cannot form approximate 2-designs over the matchgate, orthogonal, or   symplectic groups; neither can they form approximate 4-designs over the Clifford group. Strikingly, even allowing these circuits to constitute unitaries from beyond the group over which they are attempting to form a design does not enable one to overcome this sublinear depth barrier~\cite{west2026ambient}. The heart of the difficulty may be traced to the possessing by  these groups of invariant states in low-degree tensor powers of their defining representations; essentially these states may be used to introduce  sublinearity-detecting lightcones. Indeed, suppose that there exists a state $\ket \Psi\in\mch\ts$ such that,  for all unitaries $U$ in a group $G$, one has $U\ts\ket\Psi=\ket\Psi$. Then, for any $V\in \mbu(\mch)$, one further has that $U\ts (V\ot\id_\mch)\ket\Psi =   ((UVU\ad)\ot\id_\mch) \ket\Psi $. If one is working with nearest-neighbour one-dimensional  circuits,   the Heisenberg-evolved $UVU\ad$ then reveals   lightcone effects which lead to $\Om(n)$ lower bounds on the depths of approximate designs~\cite{west2025no,grevink2025will,west2026ambient}. \\

Operationally, an ensemble $\mce$ forming a $k$-design over a group $G$ may be formalised as the statement that, given access to $U\tk$ for a $U$ which is promised to be drawn either as $U\sim\mce$ or $U\sim\mu_G$,  it is impossible to tell which is true (with a low but   non-zero probability of success then being captured by the notion of an \textit{approximate} design). Various conditions on the experiments one is permitted to do on $U\tk$  then lead to different notions of a design. For example, if one is permitted only to perform a POVM directly on $U\tk$, then indistinguishability from the Haar case certifies $\mce$ as an \textit{additive error design}. A stronger criterion, which for example allows one to perform adaptive queries in which results of experiments on some copies of $U$ inform the experiments performed on subsequent ones, is captured by the notion of a \textit{measurable error design}; requiring further that the probabilities of experiments   be approximated multiplicatively instead of additively gives the even stronger notion of a \textit{relative error design}~\cite{cui2025unitary,mele2023introduction}, on which we will focus on this work. Independently of which metric one is using to quantify approximation, a natural strengthening of the notion of a design, to that of a so-called \textit{strong design},  is to consider experiments which can probe not only elements $U\sim\mce$, but also $U^\mst,\overline{U},$ and $U\ad$~\cite{schuster2025strong,parella2026strong}. Under this access model, the lightcone   obstructions  which applied already to the matchgate, orthogonal,   symplectic, and Clifford groups under the standard model apply in the full unitary case as well, immediately ruling out the possibility of sublinear depth strong unitary designs on a one-dimensional nearest neighbour model.  \\

The fact that lightcone arguments  rule out rapid design formation over the matchgate  group, and  rapid strong design formation over the unitary group, suggest that one   might in both cases consider the ability of circuits with all-to-all connectivity, for which lightcones become vacuous beyond logarithmic depth, to form such group designs. Indeed, it has been shown that under such connectivity strong unitary designs can form in  logarithmic depth~\cite{schuster2025strong,parella2026strong}; the lightcones of the 1D case are in a sense the ``only obstruction'' to sublinear depth formation. It is natural, then, to wonder if the same is true in the matchgate case; in this work, we show that the answer is yes. More generally, on an arbitrary qubit connectivity graph $G$ with \textit{routing number} $\rtg$~\cite{yuan2025full}, we construct  $\varepsilon$-approximate relative error strong matchgate $k$-designs in depth $\mco(k^2\rtg \log n\log(n/\varepsilon))$, and exact strong matchgate $3$-designs in  $\mco({\rm rt}(\mathsf{G})\log n)$ depth. In particular, the routing number (to be defined formally in Section~\ref{sec:prelim}) of an all-to-all connected graph is equal to two, so that under all-to-all qubit connectivity our 3-designs achieve the optimal $\Theta(\log n)$ depth. This implies that, for applications which require (approximate) matchgate designs, and for which the generation of the design elements is the   most expensive part of the procedure, quantum computers with all-to-all connectivity can enjoy an exponential speed up over their one-dimensional counterparts. 

\section{Preliminaries}\label{sec:prelim}
In this section we briefly review the properties of group designs, and the matchgate group, that will be important in this work.  
For an ensemble $\mu$ of unitaries and nonnegative integers $p,q$, we define the \textit{mixed moment channel}
\begin{equation}\label{eq:mom}
 \Phi_\mu^{(p,q)}(X)=\expect_{U\sim\mu}\left[(U^{\otimes p}\otimes\overline{U}{}^{\otimes q})X(U^{\otimes p}\otimes\overline{U}{}^{\otimes q})^\dagger
 \right].
\end{equation}
For a (compact) group $G\subseteq\mbu(d)$, we write $\mu_G$ for the  Haar measure on $G$.  The ordinary $k$\textsuperscript{th} moment operator~\cite{mele2023introduction} with respect to $G$ is thus the case $(p,q)=(k,0)$, and $\mu=\mu_G$.
For linear maps, $\Psi_1$ and $\Psi_2$, we denote $\Psi_2-\Psi_1$ being completely positive by $\Psi_1\preceq\Psi_2$. We say that an ensemble is a strong relative error $\varepsilon$-approximate $k$-design over $G$  if, for every $p+q=k$,
\begin{equation}
 (1-\varepsilon)\Phi_{\mu_G}^{(p,q)}\preceq\Phi_\mu^{(p,q)}\preceq(1+\varepsilon)\Phi_{\mu_G}^{(p,q)}.
 \label{eq:relative}
\end{equation}
Now, let $\mch\cong\mbc^{d}$ with $d=2^n$ be the Hilbert space of $n$ qubits, and let $V=\mathbb C^{2n}$ be the vector representation of $\mbso({2n})$.
The $4^{n}$  Majorana monomials $c_{a_1}\cdots c_{a_s}$, for $0\leq s\leq {2n}$ and $a_1<\cdots<a_s$,
form an orthogonal basis of $\eh$.  
We will  make the standard choice
\begin{equation}
 c_{2j-1}=Z_1\cdots Z_{j-1}X_j,\qquad c_{2j}=Z_1\cdots Z_{j-1}Y_j, \qquad 1\leq j\leq n.
\end{equation}
of Jordan-Wigner Majorana operators, which obey the canonical anticommutation relations $\{c_\mu,c_\nu\}=2\delta_{\mu\nu}\id$.
Next, for $a<b$ we define the 2-Majorana rotation
\begin{equation}
 R_{ab}(\vartheta)=\exp\left(-\frac{\vartheta}{2}c_ac_b\right),
 \label{eq:rotation}
\end{equation}
which acts on $c_a$ and $c_b$ as 
\begin{align}
 R_{ab}(\vartheta)c_aR_{ab}(\vartheta)^\dagger &=\cos\vartheta\,c_a+\sin\vartheta\,c_b,\nonumber\\
 R_{ab}(\vartheta)c_bR_{ab}(\vartheta)^\dagger&=-\sin\vartheta\,c_a+\cos\vartheta\,c_b,
 \label{eq:rot}
\end{align}
and fixes every other Majorana.  We define the matchgate group $\mathsf{MG}(n)\cong \operatorname{Spin}(2n)$ to be the group generated by these unitaries. As is well known, the adjoint action of a matchgate unitary on a  Majorana maps it to its image under the associated orthogonal matrix (i.e., the Majoranas furnish the vector representation $V$ of $\mbso(2n)$);  on a  product of $s$ Majoranas, it  acts as the $s$\textsuperscript{th} exterior power of that matrix.  This gives an  
  isomorphism
\begin{equation}\label{eq:iso}
 \eh\cong \bigoplus_{s=0}^{2n}\bigwedge^s V=:\bigwedge V.
\end{equation}
We will be   interested in the eigenvalues of the operator representing $R_{ab}(\vartheta)$ in various representations; it will however be convenient not to speak of eigenvalues \textit{per se}, but rather of \textit{charges}, where the charge corresponding to an eigenvalue $e^{-iq\vartheta}$ is $q$.
On the space  $V\cong \mbc^{2n}$ of Majoranas, a rotation in this $(a,b)$ plane then has   eigenvectors with charges $+1$ and $-1$ (one of each, the complex linear combinations of the Majoranas $a$ and $b$   which are fixed up to phase by the rotation (see Eq.~\eqref{eq:rot})), and  $2n-2$ other eigenvectors with charge zero (that is, eigenvalue $e^0 = 1$). Now, by the isomorphism of Eq.~\eqref{eq:iso}, a basis of ${\rm End}\,\mch$ may be obtained by choosing subsets of
these eigenvectors and taking their wedge product.  Each such wedge product is also  eigenvector of the Majorana rotation, with a charge given by the sum of the charges of its factors. Since each of the two eigenvectors with charges $+1$ and $-1$ can occur at most once, this resulting charge must be in
  $\{-1,0,1\}$. More generally, $R_{ab}(\vartheta)$ acts via the $k$-fold tensor product
representation on $({\rm End}\,\mch)\tk$; as charges are additive, they can in that representation take
every integer value from $-k$ to $k$.
Equivalently, defining the Hermitian generator $X_{ab}$ by $\Ad_{R_{ab}(\vartheta)\tk}=e^{-i\vartheta X_{ab}}$, a  vector has charge $q$ exactly  when it is an eigenvector of $X_{ab}$ with eigenvalue $q$. The bounding of the magnitude of these eigenvalues will be important later, when we consider certain spectral properties of our design constructions.\\

Next, we notice that in the case of ensembles whose elements belong to the matchgate group, the distinction between regular and strong matchgate designs in fact disappears. The reason for this is the possession by the matchgate group of a preserved bilinear form $\Om=X\ot Y\ot X\ot Y\ot X\ldots$~\cite{west2025no,west2026ambient}, which one can readily verify to satisfy $U\Om =\Om \overline{U}$, for any matchgate unitary $U$. But then, with $W_{p,q} = \id_{\mch}^{\ot p}\ot \Om^{\ot q}$, we have (recalling Eq.~\eqref{eq:mom})  
\begin{equation*}
\Phi_\mu^{(p,q)}=\Ad_{W_{p,q}}\circ\Phi_\mu^{(k,0)}
       \circ\Ad_{W_{p,q}};
\end{equation*}
as pre- and post-composition with unitary channels
preserve CP order, we have from Eq.~\eqref{eq:relative} that, at least for ensembles that consist of elements which are themselves within the matchgate group, being a strong matchgate $k$-design is equivalent to being a regular matchgate $k$-design. Operationally, the preserved bilinear form allows one to transform copies of $U$ into copies of $\overline{U}$, which can then be used in any experiment to distinguish a putative design from the Haar measure; copies of $U^\mst$ and $U\ad$ can be introduced by post-selecting on contractions with Bell states~\cite{schuster2025strong}. \\

Finally, we make precise the notion of the routing number $\rtg$ of a graph $\mathsf{G}=(V,E)$~\cite{yuan2025full}. Choose an ordering of the vertices, and ``attach'' the number $i$ to the $i$\textsuperscript{th} vertex (for all $i$). Suppose we are given an arbitrary ``target permutation'' $\pi\in S_{|V|}$, and that at each timestep, we can choose a matching $M\subseteq E$ of the edges and swap the numbers at the endpoints of these edges. There is some minimum number of timesteps needed in order for us to have rearranged the contents of the vertices into the order specified by $\pi$; the maximum of that number over all permutations is defined to be $\rtg$.  From the readily verified fact that any permutation may be written as the composition of two involutions, it for example follows that $\rtg=2$ if $G$ is all-to-all connected.

\section{Results}
As advertised, our main result is an explicit construction of strong
$\varepsilon$-approximate relative-error $n$-qubit matchgate $k$-designs,
with depth polylogarithmic in $n$ for fixed $k$, $\varepsilon$, and $\rtg$.
Our construction is a random walk on the matchgate group, with each step in the walk taking the form of a 2-Majorana rotation $R_{a,b}(\vartheta)$ (see Eq.~\eqref{eq:rotation}) for randomly chosen Majoranas $c_a,c_b,$ and a random rotation angle $\vartheta$. 
At a high level, we have three steps (see also Figure~\ref{fig:1}):
\begin{enumerate}[label=(\arabic*),leftmargin=*]
    \item \textit{Bounding walk length.} Randomly sample $m\in\mco(k^2 n\log (n/\varepsilon))$ pairs $1\leq a_i< b_i\leq 2n$, and $m$ angles $\vartheta_i$.
    \item \textit{Bounding the number of layers.}
    Check if  $\{R_{a_i,b_i}(\vartheta_i)\}_{i=1}^m$ can be parallelised into $ \mco(k^2 \log (n/\varepsilon))$ layers; if not, go back to step one. Return the concatenation of two such independently accepted walks. 
    \item \textit{Bounding the depth of each layer.} Compile each parallel layer of Majorana rotations into two qubit gates of depth $\mco(\rtg\log n)$ via a fermionic router.
\end{enumerate}

\vspace{4mm}

In a little more detail: In Step 1, the fact that it suffices to  take $m\in\mco(k^2 n\log (n/\varepsilon))$ follows from some Fourier  analysis of the random walk on the matchgate group induced by making these random 2-Majorana rotations. In Step 2, we need to control the  probability of a randomly chosen set of Majorana rotations failing to compress into  $ \mco(k^2 \log (n/\varepsilon))$ layers of commuting rotations. Essentially, we will see that the walks of Step 1  already give a strong  approximate design; the difficulty is that as they involve $\mco(k^2 n\log (n/\varepsilon))$ random gates, for some walks we will get unlucky and wind  up with a circuit that cannot be compressed below linear depth. The technical content of Step 2 is then to show that conditioning our sampling of random walks on their being compressible also gives a design (we will see that this is not quite true; instead of simply sampling from the set of walks conditioned on being compressible, we will need to independently sample two such walks, and return their concatenation). By the end of Step 2 we therefore have an ensemble of shallow, ``local'', circuits which constitute a design; the problem is that they are local only in the fermionic sense (i.e., consisting of 2-Majorana rotations). In Step 3 we show that, at a cost of a factor of $\rtg\log n$, these   circuits can be compiled into physical 2-local gates. \\

\begin{figure}
\setlength{\captionindent}{5pt}
    \centering
    \includegraphics[width=\linewidth]{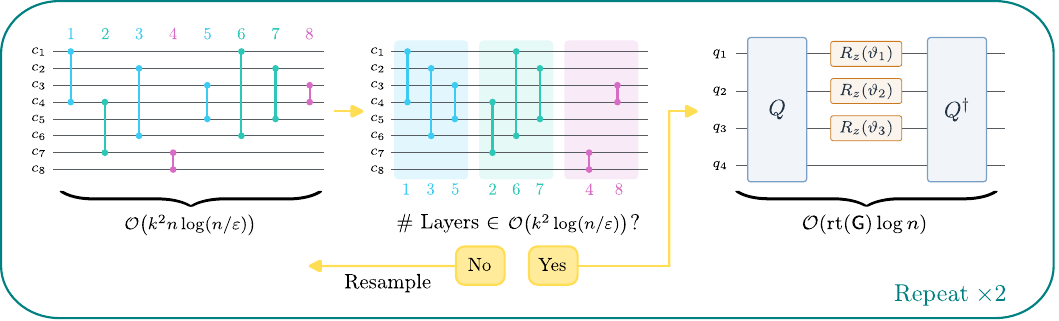}
    \caption{ Our construction proceeds in three steps: (1) Randomly sample $m\in\mco(k^2 n\log (n/\varepsilon))$ pairs $1\leq a_i< b_i\leq 2n$, and $m$ angles $\vartheta_i$; apply the corresponding Majorana rotations $\{R_{a_i,b_i}(\vartheta_i)\}_{i=1}^m$.  (2)
    Check if those rotations can be parallelised into $ \mco(k^2 \log (n/\varepsilon))$ layers of commuting gates; if not, go back to step one.  (3) 
    Compile each parallel layer of Majorana rotations into two qubit gates of depth $\mco(\rtg\log n)$ via a fermionic router. Concatenate   two such independently accepted circuits; the total depth is then given by  (\# Layers) $\times\, \mco(\rtg\log n) = \mco(k^2\rtg\log(n/\varepsilon)\log n)$. }
    \label{fig:1}
\end{figure}

Before formalising this argument properly, we note that   the main technical problem which arises in Step 3 is that of compiling an arbitrary fermionic permutation into 2-local physical gates, which is a task that has received some attention in the literature. For example, previous constructions have shown   that (assuming all-to-all connectivity) this can be done in depth  $\mco(\log^2 n)$~\cite{constantinides2025low}, and more recently, depth $\mco(\log^{1+a} n)$ for any $a>0$~\cite{aigner2026fermion}. When allowing the use of ancilla qubits, mid-circuit measurements, and classical feed-forward, it has been shown that the optimal depth of $\Theta(\log n)$ can be achieved~\cite{maskara2025fast}. It is then perhaps of some independent interest that we show

\begin{restatable}{thm}{thmpermcomp}\label{thm:perm_comp}
An arbitrary signed permutation of the $2n$ Majoranas can be   compiled into   depth   $\mco(\rtg\log n)$, without ancillae,   using only Cliffords. For all-to-all connectivity, this is optimal, and can be done efficiently.
\end{restatable}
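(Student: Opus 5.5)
Here is the plan. We write the target signed permutation $\sigma$ of the $2n$ Majoranas in three parts: a product of within-site operations, a permutation of whole qubit sites (pairs $\{c_{2j-1},c_{2j}\}$), and one layer of single-qubit Cliffords. The permutation of sites is realised by \emph{fermionic} swaps (fSWAPs), which are matchgate Cliffords; the delicate part is the Jordan--Wigner signs. For the decomposition, view $\sigma$ as a permutation of the $2n$ Majoranas and consider the induced pairing of the $n$ target sites with the $n$ source sites. Each site receives two Majoranas, which generally come from two different sites. Form the bipartite $2$-regular multigraph linking source site $j$ to the target sites of $\sigma(c_{2j-1})$ and $\sigma(c_{2j})$. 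By K\"onig's theorem its edges split into two perfect matchings, i.e.\ two site permutations $\pi_1,\pi_2$. We then route as follows:
\begin{enumerate}[label=(\roman*)]
\item a layer of 2-Majorana Clifford rotations $R_{ab}(\pi/2)$ between pairs of sites, which exchanges one Majorana of each pair;
\item a site permutation;
\item another such exchange layer;
\item a final site permutation;
\item single-qubit Cliffords that fix the orientation and sign within each site.
\end{enumerate}
This is the fermionic analogue of a Bene\v{s}/Clos three-stage network. The exchange layers act on pairs of sites that must be brought adjacent, so each layer is itself preceded and followed by a site routing. In total there are $O(1)$ site permutations plus $O(1)$ layers of depth-$O(1)$ gates between adjacent sites. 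In practice I would reduce everything to the statement: \emph{any permutation of qubit sites, implemented by fSWAPs with correct fermionic signs, takes depth $O({\rm rt}(\mathsf G)\log n)$}.

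That fermionic site permutation is the main step and the main obstacle. A physical SWAP along an edge $(i,j)$ conjugates Majoranas correctly only if the Jordan--Wigner strings between $i$ and $j$ are unaffected. This holds for fSWAP only when $i,j$ are adjacent in the JW order, and for general edges of $\mathsf G$ it fails. Routing with ${\rm rt}(\mathsf G)$ matching layers of plain SWAPs therefore gives a permutation of qubits that is \emph{not} the desired permutation of Majoranas. Rather than fix parity strings edge by edge, I would correct them globally. After physically permuting qubits by $\pi$ with SWAPs, the Majorana $c_{2j-1}$ is mapped to $Z_{S}X_{\pi(j)}$, where $S=\pi(\{1,\dots,j-1\})$. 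We need to convert this to the JW string $Z_{\{1,\dots,\pi(j)-1\}}$. The discrepancy $Z$-string is diagonal and quadratic in the occupation numbers: it equals the phase $(-1)^{\sum_{\text{inversions }(i,l)\text{ of }\pi} n_i n_l}$, i.e.\ a product of controlled-$Z$ gates on the inversion pairs of $\pi$. So the fermionic permutation equals the qubit permutation composed with a CZ-circuit, i.e.\ a diagonal Clifford $\prod_{(i,l)\in{\rm Inv}(\pi)}{\rm CZ}_{il}$.

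It remains to implement an arbitrary CZ-circuit, corresponding to the phase polynomial of an arbitrary graph on $n$ vertices, in depth $O({\rm rt}(\mathsf G)\log n)$ without ancillae. A general phase circuit can need depth $\Omega(n)$, so here we must use the structure of inversion sets. I would first try to build the permutation recursively, sorting-network style: split into halves, recursively fermionically permute each half in parallel, then merge. For a merge step (riffle/interleaving permutation), the inversion set is a ``bipartite threshold'' pattern. The corresponding CZ phase can be realised by a parity-network trick: compute prefix parities $\bigoplus_{l\le i} n_l$ into the qubits in place via a CNOT prefix-sum circuit, which has depth $O(\log n)$ in all-to-all connectivity and costs an extra ${\rm rt}$ factor in general through routing. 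Then apply $Z$'s or CZ's controlled on those parities, and uncompute. I expect the naive recursion to give $O(\log^2 n)$. To reach $O(\log n)$, I would instead decompose $\pi=\pi_2\pi_1$ into two involutions (as in the proof that ${\rm rt}=2$ for all-to-all connectivity), or route with ${\rm rt}(\mathsf G)$ matchings. For each matching layer we then need only the string corrections for a set of disjoint transpositions $(i_r,j_r)$. Swapping $i,j$ fermionically equals SWAP times $\mathrm{CZ}_{ij}$ times $Z$-strings on $n_l$, $i<l<j$, coupled to $n_i\oplus n_j$. For a matching, these corrections are products of $Z_l$ controlled on parities $n_{i_r}\oplus n_{j_r}$ over all pairs $r$ whose interval contains $l$. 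This is a sum over $r$ of (parity of pair $r$)$\cdot$(a linear function determined by intervals). Here is the concrete plan: first CNOT-copy each pair parity into one qubit of the pair (depth $1$); then compute interval-prefix parities of the $n_l$ with an $O(\log n)$-depth in-place CNOT prefix-sum tree; apply one round of CZs between paired qubits and prefix qubits; and uncompute. Each of these is a linear-reversible or CZ layer of depth $O(\log n)$ up to routing overhead ${\rm rt}(\mathsf G)$, giving $O({\rm rt}(\mathsf G)\log n)$ per matching. With all-to-all connectivity there are two matchings, hence $O(\log n)$. All gates are Cliffords and everything is computed classically in polynomial time, which gives efficiency.

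For optimality under all-to-all connectivity: a single Majorana $c_{2n}$ carrying the full string $Z_1\cdots Z_{n-1}Y_n$ must be mapped by some signed permutation to $c_1=X_1$. The inverse circuit then maps the weight-one operator $X_1$ to a weight-$n$ operator. A depth-$D$ circuit of 2-qubit gates has light cone of size at most $2^D$, so $D\ge\log_2 n$.
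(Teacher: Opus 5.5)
\paragraph{There is a gap in the per-matching string correction.}
Several parts of your reduction are sound: the K\"onig colouring, the exchange layers between JW-adjacent sites, the single-qubit Cliffords for orientation and signs, and the identity ``fermionic site permutation $=$ qubit permutation $\times\prod_{\mathrm{Inv}(\pi)}\mathrm{CZ}$''. The light-cone lower bound is also fine. The gap is in the step you flag as the crux: the $\mco(\log n)$ string correction for one matching layer.

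Your correction consists of $\mathrm{CZ}_{i_rj_r}$ together with $Z_l$ controlled on $n_{i_r}\oplus n_{j_r}$ for every $l$ strictly inside $(i_r,j_r)$, all evaluated on the input occupations. This is correct only when no two transpositions cross. Take $i<k<j<m$ with pairs $(i,j)$ and $(k,m)$. The true phase, which is the inversion form of the involution, differs from yours by $(n_i\oplus n_j)(n_k\oplus n_m)$, because each swap changes the interval occupation seen by the other. This discrepancy cannot be absorbed into Majorana signs: conjugation by it sends $c_{2i-1}$ to $\pm c_{2i-1}Z_kZ_m$.

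Concretely, for $(1\,3)(2\,4)$ on four modes, $a_1^\dagger a_2^\dagger\ket{0}$ must go to $+a_3^\dagger a_4^\dagger\ket{0}$. Your circuit picks up $(n_1\oplus n_3)n_2=1$ and outputs the opposite sign.

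In general the missing term is $\sum_{r\sim s}(n_{i_r}\oplus n_{j_r})(n_{i_s}\oplus n_{j_s})$, summed over the crossing graph of the matching. That graph can be an arbitrary circle graph; for the pairs $(i,i+n/2)$ it is complete. A single round of CZs between pair qubits and prefix qubits does not produce it, and nothing in your plan does.

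Nor can you sidestep crossings. Non-crossing involutions of $[n]$ are counted by Motzkin numbers, which are at most $3^n$. Products of $k$ of them therefore reach at most $3^{kn}$ of the $n!$ permutations, which forces $k=\Omega(\log n)$. Your prefix trick does work for each non-crossing involution at cost $\mco(\log n)$, but with this many involutions you only get $\mco(\log^2 n)$, which is the depth already known in the literature.

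\paragraph{Smaller points.}
Holding all pair parities and both endpoint prefix parities at once requires more than $n$ independent linear functions, so ``in place'' has to be rephrased, for example entirely in the prefix basis. Also, $\rtg$ matchings each costing $\mco(\rtg\log n)$ would give $\rtg^2\log n$. You should instead compile an all-to-all circuit layer by layer onto $\mathsf G$.

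\paragraph{How the paper avoids the problem.}
The paper never corrects JW strings explicitly. It places the odd Majoranas and the parity operator at the leaves of a rooted binary tree whose root-to-leaf paths spell out their Pauli strings. CNOT and Hadamard then act as Whitehead moves on this tree. Rafi--Tao's theorem, that any two leaf orderings are $\mco(\log n)$ simultaneous Whitehead moves apart, gives the logarithmic depth directly, with the strings tracked automatically. After that:
\begin{enumerate}
\item a depth-$2$ qubit permutation restores the qubit order;
\item the conjugation $F_\rho K F_\rho^\dagger$ isolates the odd family (this plays the role of your exchange layers);
\item $S$ gates handle the even family and parity-changing Majoranas;
\item a single Pauli string fixes the signs.
\end{enumerate}
To rescue your route, you would need an ancilla-free $\mco(\log n)$ implementation of the full inversion form of an arbitrary crossing involution, which is essentially the whole difficulty.
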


\noindent
Appendix~\ref{sec:perm} is devoted to the proof of Theorem~\ref{thm:perm_comp}. As an immediate corollary, we have: 

\begin{restatable}{crl}{crlthreedes}\label{crl:threedes}
There is an exact  $n$-qubit matchgate $3$-design which can be compiled into a 2-local Clifford circuit depth $\mco(\rtg\log n)$. For all-to-all connectivity this is optimal, and the compilation is efficient.
\end{restatable}
\begin{proof}
The technical input we need here is that the uniform distribution on the intersection of the Clifford and matchgate groups (that is, the signed Majorana-permutations of determinant one) are known to form an exact matchgate 3-design~\cite{gargiulo2026pauli,wan2022matchgate,heyraud2024unified}.  To uniformly sample such a signed permutation (i.e., a $Q$ satisfying $Qc_iQ\ad =\sigma_i c_{\pi(i)}$ for all $1\leq i \leq 2n$, and signs $\sigma_i\in\{\pm 1\}$), we sample a uniform permutation $\pi\in \mathfrak{S}_{2n}$, choose $\sigma_1,\ldots,\sigma_{2n-1}$ randomly, and then set $\sigma_{2n}={\rm sgn}(\pi)\prod_{i=1}^{2n-1}\sigma_i$. We then simply compile it into depth $\mco(\rtg\log n)$ using Theorem~\ref{thm:perm_comp}. By the discussion of Section~\ref{sec:prelim}, the fact that our 3-design elements themselves belong to the matchgate group allows us to automatically conclude that they form a \textit{strong} matchgate 3-design. Finally, the fact that logarithmic depth is optimal follows from an elementary light   cone argument discussed in Section~\ref{sec:discussion}.
\end{proof}

\noindent
Corollary~\ref{crl:threedes} is very convenient: we achieve a zero-error 3-design (which is already sufficient for many applications~\cite{huang2020predicting,west2026classical,wan2022matchgate,zhao2021fermionic,low2022classical,west2026particle}) using only Clifford gates, which are particularly amenable to be implemented within many quantum error correcting codes~\cite{gottesman1998theory,bombin2015gauge,steane1996multiple,bombin2006topological}. \\

\noindent
For  general $k>3$, however, we can no longer rely on sampling random matchgate Cliffords, and instead need to implement the three step program described at the start of this section. We find:\\

\begin{restatable}{thm}{thmmain}\label{thm:main}
There is a strong  $\varepsilon$-approximate relative   error $n$-qubit matchgate $k$-design consisting of  matchgates which can be compiled into a 2-local circuit of depth $\mco(k^2\rtg\log n\,\log (n/\varepsilon))$. For all-to-all connectivity,   the compilation is efficient.
\end{restatable}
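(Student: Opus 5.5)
We follow the three-step program above. Let $\nu$ be the law of one step of the walk: uniform $a<b$ and uniform $\vartheta\in[0,2\pi)$, giving $R_{ab}(\vartheta)$. Write $\Phi_\nu:=\Phi^{(k,0)}_\nu$ and $P:=\Phi^{(k,0)}_{\mu_{\mathsf{MG}(n)}}$. By the discussion in Section~\ref{sec:prelim}, every ensemble we build lies inside $\mathsf{MG}(n)$, so it suffices to prove the ordinary ($p=k$, $q=0$) relative-error bound.

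\textbf{Step 1: spectral gap.} Averaging over $\vartheta$ turns $\Ad_{R_{ab}(\vartheta)^{\otimes k}}=e^{-i\vartheta X_{ab}}$ into the projector $\Pi_{ab}$ onto the charge-zero eigenspace of $X_{ab}$. Hence $\Phi_\nu=\binom{2n}{2}^{-1}\sum_{a<b}\Pi_{ab}$ is a self-adjoint contraction with $\Phi_\nu P=P=P\Phi_\nu$. A vector is fixed by every $\Pi_{ab}$ exactly when it is annihilated by all generators $X_{ab}$ of $\mathfrak{so}(2n)$, i.e.\ exactly when it lies in the range of $P$. So we need a quantitative gap $\|\Phi_\nu-P\|\le 1-\Delta$. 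We aim for $\Delta\gtrsim 1/(k^2 n)$, via a frustration-free Hamiltonian argument: let $H=\sum_{a<b}(\id-\Pi_{ab})$. Since $|{\rm charge}|\le k$, every vector off the kernel of $X_{ab}$ has $X_{ab}^2$-eigenvalue between $1$ and $k^2$, which gives
\begin{equation*}
\id-\Pi_{ab}\succeq k^{-2}X_{ab}^2 .
\end{equation*}
Summing over $a<b$, $H\succeq k^{-2}C$, where $C=\sum_{a<b}X_{ab}^2$ is the quadratic Casimir of $\mathfrak{so}(2n)$ on $(\bigwedge V)^{\otimes k}$. Every nontrivial irrep of $\mathfrak{so}(2n)$ has Casimir eigenvalue at least that of $V$, which is of order $n$ in this normalisation. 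Therefore $\Delta\ge\Omega(n/(k^2n^2))=\Omega(1/(k^2 n))$.

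Converting the gap to relative error costs the usual factor $\dim\sim 2^{2nk}$: one has $\|\Phi_\nu^m-P\|_{\diamond}\le 2^{O(nk)}(1-\Delta)^m$, and additive-to-relative conversion multiplies by a further $d^{O(k)}$. With the gap above this would give $m=O(k^3 n^2)$, which is too many. Instead we take $m=O(k^2 n\log(n/\varepsilon))$ as in the plan, replacing the crude dimension bound by a sharper argument: either the Fourier/character expansion over $\mathrm{Spin}(2n)$, in which the irreps $\lambda$ appearing have row lengths at most $k$, so the relevant dimensions are only $n^{O(k)}$, or the trick of concatenating two independent half-walks so that $\Phi\Phi^\dagger$ is compared with $P$ through a Choi-matrix positivity bound, costing only $\log(\dim)$ restricted to $\lambda$ with $|\lambda|\le k$.

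\textbf{Step 2: conditioning on compressibility.} This is the main obstacle. A uniform sequence of $m$ random pairs can be greedily scheduled into layers of disjoint pairs, and pairs on disjoint Majoranas commute. A balls-in-bins/Chernoff argument shows that with probability $1-\delta$, where $\delta\le(\varepsilon/n)^{\Omega(1)}$, the walk fits in $L=O(k^2\log(n/\varepsilon))$ layers. To see this, group the walk into blocks of $n/c$ consecutive steps. Each block is nearly a matching, and its overlaps form a graph whose components can be layered in $O(1)$ depth with exponential tails.

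Let $\tilde\nu$ be the conditioned walk. Then $\nu^{*m}=(1-\delta)\tilde\nu+\delta\rho$. Positivity alone only gives the upper bound $\Phi_{\tilde\nu}\preceq(1-\delta)^{-1}\Phi_{\nu^{*m}}\preceq(1-\delta)^{-1}(1+\varepsilon)P$; the lower bound can fail for a single conditioned walk. This is why we concatenate two accepted walks $U_1U_2$. Writing $\Phi_{\tilde\nu}=(1-\delta)^{-1}(\Phi_{\nu^{*m}}-\delta\Phi_\rho)$, we get $\Phi_{\tilde\nu}^2$ as $\Phi_{\nu^{*m}}^2$ plus error terms. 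Each error term contains a factor of $\Phi_{\nu^{*m}}$, which is already within relative $\varepsilon$ of $P$, composed with a CP map whose $P$-projected part equals $P$ times a scalar. Since $P\Psi=P$ for any ensemble supported in $\mathsf{MG}(n)$ (invariance of the Haar measure), these error terms are $(1\pm O(\varepsilon+\delta))P$ in CP order. Choosing $\delta\le\varepsilon$ gives relative error $O(\varepsilon)$.

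\textbf{Step 3: compilation.} Each layer is a product of disjoint 2-Majorana rotations. Apply Theorem~\ref{thm:perm_comp} to get a Clifford signed permutation $Q$ carrying the pairs in the layer to the pairs $(2j-1,2j)$. Conjugated by $Q$, the layer becomes a product of the single-qubit rotations $e^{-\frac{\vartheta}{2}c_{2j-1}c_{2j}}\propto e^{i\frac{\vartheta}{2}Z_j}$, which are depth one, and then we undo $Q$. Each layer therefore costs $O(\rtg\log n)$, and the $2L$ layers give total depth $O(k^2\rtg\log n\log(n/\varepsilon))$. Under all-to-all connectivity, efficiency follows from the efficient router and from rejection sampling, whose expected number of trials is $1/(1-\delta)=O(1)$.
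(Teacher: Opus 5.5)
\textbf{Gap in Step~1.} Your architecture matches the paper's: a random 2-Majorana walk, conditioning on compressibility, concatenating two accepted walks, and compiling each layer with a Clifford signed permutation. But nothing you write supports $m=\mco(k^2n\log(n/\varepsilon))$. You bound every nontrivial block using the \emph{minimal} Casimir (that of $V$), which gives a uniform gap $\Delta=1/(k^2n)$. You then assert that the relevant irreps have dimension $n^{O(k)}$ (or $|\lambda|\leq k$). That is false: irreps in $(\bigwedge V)^{\otimes k}$ have $\lambda_1\leq k$, but they can have up to $n$ nonzero rows.
\begin{itemize}
\item Already for $k=1$, $\bigwedge^nV$ splits into two irreps of dimension $\tfrac12\binom{2n}{n}$.
\item In general the highest weight $(k,\ldots,k)$ occurs (top component of $(\bigwedge^n_+V)^{\otimes k}$, with $\bigwedge^n_+V$ the self-dual half). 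Its dimension is $\prod_{i<j}\bigl(1+\tfrac{2k}{2n-i-j}\bigr)=2^{\Theta(nk)}$ for $k\leq n$.
\end{itemize}
With only a uniform gap, a dimension-weighted criterion such as Lemma~\ref{lem:fourier} needs $(1-\Delta)^m\lesssim\delta\,2^{-\Theta(nk)}$, i.e.\ $m=\Omega(k^3n^2)$. The layer count $L\sim m/n$ then becomes linear in $n$ and the theorem is lost. Even granting your $n^{O(k)}$ claim, you would get $m\sim k^3n\log n$, not the stated bound.

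\textbf{The missing idea.} The gap grows with the irrep and exactly pays for its dimension. Your inequality $\id-\Pi_{ab}\succeq k^{-2}X_{ab}^2$ gives $\id-S_\sigma\succeq C_2(\sigma)/(k^2\binom{2n}{2})$ on each block. Grade the irreps by degree $r=\sum_i a_i$, where $a=(\lambda_1,\ldots,\lambda_{n-1},|\lambda_n|)$.
\begin{itemize}
\item $C_2(\sigma)=\sum_i a_i(a_i+2n-2i)\geq nr$, so $\|S_\sigma\|\leq\exp(-r/(k^2(2n-1)))$.
\item Every degree-$r$ type embeds in $V^{\otimes r}$, so $\sum_{\sigma\in\Sigma_r}d_\sigma^2\leq(2n)^{2r}$.
\end{itemize}
Then $\sum_{\sigma\neq\mathbf{1}}d_\sigma^2\|S_\sigma\|^m\leq\sum_{r\geq1}z^r$ with $z=4n^2e^{-m/(k^2(2n-1))}$. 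This is below $\delta$ once $m\geq k^2(2n-1)\log(16n^2/\delta)$. Lemma~\ref{lem:fourier}, whose cost is $d_\sigma^2$ per distinct type independent of multiplicity, then yields relative error $\delta$.

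\textbf{Smaller issues in Step~2.}
\begin{itemize}
\item Your block argument, as sketched, loses a factor. In a block of $n/c$ random pairs, some Majorana lies in $\Theta(\log n/\log\log n)$ pairs with high probability. Stacking per-block layerings therefore gives $\mco(k^2\log(n/\varepsilon)\log n/\log\log n)$ layers. Instead, note that a greedy depth of at least $L$ forces a chain $i_1<\cdots<i_L$ of consecutively overlapping pairs. A union bound gives $\binom{m}{L}p^{L-1}$ with $p=(4n-3)/(n(2n-1))$, which suffices.
\item In your expansion of $T^2$, with $T=\Phi_{\tilde\nu}$, the term $\delta^2\Phi_\rho^2$ contains no factor of $\Phi_{\nu^{*m}}$. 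It is harmless: drop it for the lower bound, and bound it above via $\delta\Phi_\rho\preceq\Phi_{\nu^{*m}}$.
\item The paper's route here is cleaner. It uses only the upper bound $T\preceq(1+\varepsilon')P$, with $1+\varepsilon'=(1+\varepsilon)/(1-\delta)$. Then $E=(1+\varepsilon')P-T$ satisfies $E^2=T^2-(1-\varepsilon'^2)P\succeq0$ and $TE=(1+\varepsilon')P-T^2\succeq0$, which sandwiches $T^2$ directly.
\end{itemize}
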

\begin{proof}
Recall that our proof has, at a high level, three steps. We begin with Step 1, in which we introduce   the random walk which will constitute the elements of our  designs, and analyse its spectral properties. This analysis will allow us to determine a  sufficient walk length to achieve an approximate design of a given target error. \\

\noindent
\textit{Step 1: Bounding the walk length.} We will construct our approximate designs from  random walks on the matchgate group,    taking a  single step   of the random walk to have, for $a<b$, the form $R_{ab}(\vartheta)=\exp\left(-\vartheta c_ac_b/2\right)$. Here both the pair $a<b$ and the rotation angle $\vartheta$ are chosen uniformly randomly.  We recall from Section~\ref{sec:prelim} that the Hermitian generator $X_{ab}$ is defined implicitly by $\Ad_{R_{ab}(\vartheta)\tk}=e^{-i\vartheta X_{ab}}$, and that we say a  vector has charge $q$ exactly  when it is an eigenvector of $X_{ab}$ with eigenvalue $q$. Now, for a fixed $(a,b)$, the uniform average over the rotation angle preserves the
charge-zero subspace and annihilates everything else, so that $\expect_\vartheta[\Ad_{R_{ab}(\vartheta)\tk}]=\mathbf1_{\{X_{ab}=0\}}=:P_{ab}$, where $P_{ab}$ is the projector onto the kernel of $X_{ab}$. A single step of our random walk therefore has $k$\textsuperscript{th} moment operator
\begin{equation}
    S(A) = \expect_{R\sim\nu}\big[R\tk A (R\ad)\tk\big] = \frac{1}{\binom{2n}{2}} \sum_{a<b}\frac{1}{2\pi}\int {\rm d}{\vartheta}\ R_{ab}(\vartheta)\tk A (R_{ab}(\vartheta)\ad)\tk = \frac{1}{\binom{2n}{2}}\sum_{a<b}P_{ab}(  A )\,,
\end{equation}
where we denote by $\nu$ the measure on the matchgate group that captures our setup (i.e., pick a random pair, a random angle, and apply the corresponding $R_{ab}(\vartheta)$). As discussed, we are interested in analysing the spectral properties of $S$, which it will be helpful to do   on an irrep-by-irrep basis. We let $\Sigma^{(k)}$ denote the set of distinct
irrep types that appear in the $k$-fold tensor power of the adjoint representation of the matchgate group, 
and for each $\sigma\in\Sigma^{(k)}$, use
$\pi_\sigma:\mg\to\mbu(V_\sigma)$ to denote the irrep, let $d_\sigma=\dim V_\sigma$, and define $S_\sigma:=\int_{g\sim \nu}\pi_\sigma(g)$. 
That is, $S_\sigma$ is the block of $S$ acting on each copy of $V_\sigma$. We now have a useful technical lemma:


\begin{restatable}{lem}{leminfbound}\label{lem:inf_bound}
The nontrivial  $\sigma\in\Sigma^{(k)}$ can be partitioned into sets $\Sigma_r^{(k)}$, with $1\leq r\leq nk$, such that $\sum_{\sigma\in\Sigma_r}d_\sigma^2\leq(2n)^{2r}$, and, for every $\sigma\in\Sigma_r^{(k)}$,
\begin{equation*}
    \|S_\sigma\|\leq \exp \left(-\frac{r}{k^2(2n-1)}\right).
\end{equation*}
\end{restatable}

\noindent
We postpone the proof of Lemma~\ref{lem:inf_bound}, which is just an exercise in  representation theory, to Appendix~\ref{sec:fm}. The utility of the spectral bound it provides is revealed by the next lemma:

\begin{restatable}{lem}{lemfourier}\label{lem:fourier}
If 
\begin{equation}
 \sum_{\substack{\sigma\in\Sigma\\\sigma\neq\mathbf1}}d_\sigma^2\|S_\sigma\|^m\leq\delta \label{eq:fourier_bound}
\end{equation}
for some $m\geq 1$ and $0<\delta < 1$, then the $m$-step  measure $\nu^{*m}$ is a $\delta$-approximate relative error
design.
\end{restatable}
Let us provide some intuition as to what is going on in Lemma~\ref{lem:fourier} (again postponing the actual proof to the appendix). The idea is that, when  decomposed into  matchgate-irreps, the action of some $g\in\mg$  looks like
\begin{equation*}
\Ad_{g\tk}= {\rm diag}(1,1,\pi_\sigma(g),\pi_\sigma(g),\pi_\tau(g),\ldots)
\end{equation*}
where the first two\footnote{The multiplicity of the trivial irrep being two here is an arbitrary choice made for the sake of illustration}   blocks are copies of the trivial irrep, followed by the various other appearing irreps, with their multiplicities. Averaging $g$ over $\nu$ gives 
\begin{equation*}
    S= \expect_{g\sim \nu} \Ad_{g\tk} = {\rm diag}(1,1,S_\sigma,S_\sigma,S_\tau,\ldots);
\end{equation*}
for an $m$-step walk we have
\begin{equation}\label{eq:blocks}
    S^m= \expect_{g\sim \nu^{*m}} \Ad_{g\tk} = {\rm diag}(1,1,S_\sigma^m,S_\sigma^m,S_\tau^m,\ldots).
\end{equation}
On the other hand, the corresponding average  with respect to the Haar measure simply reads 
\begin{equation}\label{eq:blocks_haar}
\expect_{g\sim\mu_{\mg}}\Ad_{g\tk}={\rm diag}(1,1,0,0,0,0...),    
\end{equation}
as the non-trivial irreps average exactly to zero by the standard orthogonality theorems~\cite{fulton1991representation}. As we are ``trying to look Haar random'', we see that the quantity on the left-hand side of Eq.~\eqref{eq:fourier_bound} is clearly something we would like to be small (so that the right hand sides of Eqs.~\eqref{eq:blocks} and~\eqref{eq:blocks_haar} approximate each other); the content of Lemma~\ref{lem:fourier} is then that it is in fact   directly related to the   approximation error with respect to the relative error metric. \\

At this point, we have good control over the spectrum of a single step of the random walk of our ensemble (Lemma~\ref{lem:inf_bound}) and a spectral criterion for forming a relative error design (Lemma~\ref{lem:fourier}); combining them  swiftly yields

\begin{restatable}{lem}{lemwalkdepth}\label{lem:walk_depth}
If the number $m$ of steps in our walk satisfies
\begin{equation}
 m\geq k^2(2n-1)\log\!\left(\frac{16n^2}{\delta}\right),
\end{equation}
then $\nu^{*m}$ is a strong  $\delta$-approximate relative error matchgate
$k$-design.  
\end{restatable}

\noindent
This concludes the first step of our proof program.\\

\noindent
\textit{Step 2: Bounding the number of fermionic layers.} The second step is about controlling the length of the circuits produced in Step 1; as those circuits consist of $\mco(k^2 n\log (n/\varepsilon))$ 2-Majorana rotations,  they could in principle be far deeper than the logarithmic depth we are targeting; the question is how much parallelism is possible. So, let us analyse our ability to compress these circuits into layers of commuting rotations. Writing the   random pairs of an $m$-step walk as
$e_1,\ldots,e_m$, we assign to the 
 $i$\textsuperscript{th} rotation the level
\begin{equation}
 \ell_i=1+\max\{\ell_j:j<i,\ e_j\cap e_i\neq\{\}\},
 \label{eq:levels}
\end{equation}
use the convention $\max(\{\})=0$, and put $D=\max_i\ell_i$. That is,   a gate is placed one level after every
earlier gate that uses either of its Majorana coordinates (this is a slightly conservative scheme, as if the same pair is sampled more than once, then those gates will commute with each other; for simplicity, we ignore this, at the cost of potentially making our circuits very slightly deeper than they need to be). Analysing this, we find

\begin{restatable}{lem}{lemparadepth}\label{lem:para_depth}
 There is a constant $C$ such that, for $0<\eta<1$,   $L= C \max\left\{m/n,\, \log (n/\eta)\right\}$ implies $\Pr[D\geq L]\!<\!\eta$. 
\end{restatable}

So, Lemma~\ref{lem:para_depth} gives us hope   that, even though $m\in\mco(k^2 n\log (n/\varepsilon))$, we will be able to compress things down to logarithmic depth (although at this point there remains a bit to be seen before we can draw this conclusion completely).
Now, let us fix the $L$ given by Lemma~\ref{lem:para_depth} (the explicit constant is worked out in Appendix~\ref{sec:fm}), and write $\nu_{m,L}$ for the measure given by our walk conditioned on $D<L$. To sample from this measure, we draw a walk in the usual way, check if its depth is less than $L$, and if not, sample a new one (repeating as many times as necessary). While this conditioned ensemble may not itself lead to a relative error design, we are able to show that its convolution with itself does. Operationally, this convolution is just the concatenation of two such walks, so that the number of layers of the resulting circuit is bounded by $2(L-1)$. Formally, we find
\begin{restatable}{lem}{lemmajodoubled}\label{lem:doubled}
Let $0<\varepsilon\leq1$, and set $\delta=\eta=\varepsilon/3$. Taking $m$ as in Lemma~\ref{lem:walk_depth} and $L$ as in Lemma~\ref{lem:para_depth}, the ensemble  
$\mu_{m,L}:=\nu_{m,L}*\nu_{m,L}$ has strong relative error at most $\varepsilon$. 
\end{restatable}

The delivery unto us by Lemma~\ref{lem:doubled} of an ensemble consisting of $2L\in\mco\left(k^2\log (n / \varepsilon)\right)$ layers of commuting rotations concludes Step 2.\\

\textit{Step 3: Bounding the physical depth of each layer.} At this point, we have an ensemble  of circuits constituting an $\varepsilon$-approximate matchgate design comprised of $\mco(k^2\log(n/\varepsilon))$ layers, with each layer formed from  elementary gates of the form $R_{ab}(\vartheta)=\exp\left(-\vartheta c_ac_b/2\right)$, and with every Majorana $c_j$ appearing at most once in each layer (this follows from Eq.~\eqref{eq:levels}).  
Such ``elementary gates'', however, may be of extensive physical support, and it remains to be seen that each layer can be compiled into a logarithmic depth circuit of physical 2-qubit gates; this is the task of Step 3. 
Happily, this turns out to be a fairly immediate corollary of Theorem~\ref{thm:perm_comp}. First, note that  if it happens that $a=2j-1,\ b=2j$ for some $j$, then $\exp(-\vartheta c_ac_b/2)=\exp(-i\vartheta Z_j/2)$ is of course easy to implement; more generally we can apply a Majorana   permutation $Q$ to move the pairs of Majoranas ``next to each other'' in this fermionic sense, at which point their joint rotation simply becomes an easy single-qubit rotation. Concretely, for  a $V$ layer involving pairs $\{(a_j,b_j)\}_{j=1}^s$, let  $Q$  send each $(a_j,b_j)$ to $(2j-1,2j)$, so that
\begin{equation*}
    V=\prod_{j=1}^{s}R_{a_jb_j}(\vartheta_j) 
= Q\ad\Big[\prod_{j=1}^{s} e^{-i\vartheta_j Z_j/2}\Big]Q.
\end{equation*}
By Theorem~\ref{thm:perm_comp} we can compile $Q$, and hence each layer, into a circuit of depth $\mco(\rtg\log n)$; as there are $\mco(k^2\log(n/\varepsilon))$ total layers, this   yields the statement of the theorem. 

\end{proof}

\section{Discussion}\label{sec:discussion}
Compared to known constructions~\cite{west2025no,grevink2025will,west2026ambient,braccia2025optimal}, our setup exponentially reduces the depths required to form matchgate $k$-designs, at the cost of requiring all-to-all connectivity; it is natural to wonder, however, whether our scaling is in fact optimal. To that end, one can obtain a simple lower bound on the required depths using a lightcone argument. To see this, let us use two (strong) queries to prepare $(U\otimes\overline U)\sdket{X_1}=\sdket{UX_1U\ad}$, where $\sdket{X_1}=(X_1\ot\id_\mch)\ket\Phi$ is the Choi state of the Pauli $X$ operator on the first qubit. Now, for any depth-$D$ 2-local circuit, $UX_1U^\dagger$ is supported on at most $2^D$ qubits, with its Pauli expansion therefore containing only strings of weight at most $2^D$. Consider, then, measuring the projector onto all vectorised Majoranas strings  supported on more   than $2^D$ qubits: Every   circuit of depth less than $D$ will accept with probability zero, but  under the action of a matchgate 2-design, we will accept with probability $\max\left\{0,\,1-{2^D}/{n}\right\} $. It follows that  we require depth at least $\Om(\log n)$ to form even a matchgate 2-design, so that for $k=2,3$ our construction is optimal. For generic $k>3$, though, there remains a factor of $\log n$ between this  lower bound and our  construction.\\

It is also interesting to wonder whether our $k$-dependence can be improved to polylogarithmic, while retaining the polylogarithmic dependence on $n$. We show in Appendix~\ref{sec:kdep} this is impossible, by deriving, for $1\leq k\leq n$, a general lower bound on the depth of ancillae-free relative error matchgate designs given by $\Om(k/\log(nk))$. This does in principle leave a little room for improvement in our $k$-dependence: For  $k\sim n^{1/a}$ with $a>1$, this gives a stronger lower bound than the naive light cone argument of the previous paragraph. 
Interestingly, as it is known that the Haar measure on the matchgate group can be sampled from exactly in depth $\mco(n)$ (even under the assumption of one dimensional connectivity~\cite{braccia2025optimal}), as $k$ becomes comparable to $n$, our lower bound approaches the depth required for exact Haar sampling, up to logarithmic factors. Thus, the potential depth advantage of all-to-all connectivity is reduced to at most a logarithmic factor.  \\

Finally, we remark that our results have implications for  various popular algorithms that involve sampling from approximate matchgate designs. When only a 3-design is needed, we can by Corollary~\ref{crl:threedes} do this using only Clifford gates, and therefore transversally in many fault tolerant quantum computation schemes~\cite{gottesman1998theory,bombin2015gauge,steane1996multiple,bombin2006topological}. A prototypical such example is that of performing fermionic classical shadows~\cite{wan2022matchgate,zhao2021fermionic,low2022classical,west2026particle}, the previously required linear circuit depth  of which  may then be exponentially improved on quantum computers with all-to-all connectivity compared to that strictly one dimensional counterparts. 
As such connectivity is natural in trapped ion~\cite{moses2023race,ransford2025helios} and neutral atom~\cite{bluvstein2022quantum,bluvstein2024logical} processors, such systems may become the natural architecture for these fermionic algorithms.

\section*{Acknowledgements}
We acknowledge useful discussions with Lukasz Cincio.  MW, MC and ML acknowledge support by the Laboratory Directed Research and Development (LDRD) program of Los Alamos National Laboratory (LANL) under project number 20260043DR, and by LANL’s ASC Beyond Moore’s
Law project. This work was also supported by the Quantum Science Center, a National Quantum Information Science Research Center of the U.S. Department of
Energy. The technical details of the proofs of our results were largely proposed by a combination of GPT-5.6 and GPT-6, though with non-zero human input. All AI-generated results were verified by the authors, and variously rewritten in order to meet our standards of clarity.

\bibliography{quantum}

\appendix

\clearpage
\newpage

\section{Optimal fermionic routing}\label{sec:perm}
\noindent
In this appendix we prove Theorem~\ref{thm:perm_comp}, which we recall for convenience to state
\thmpermcomp*
\begin{proof}
For now, let us assume all-to-all connectivity; we will deal with the general case at the end. 
We find it useful to begin by efficiently   compiling permutations whose action is only specified   on the Majoranas of odd index; to simplify the notation it is then convenient to introduce $a_j=c_{2j-1},\ b_j=c_{2j}$, for $1\leq j \leq n$. 
Our proof that we can efficiently compile permutations amongst these $a_i$ (without worrying for now where the even Majoranas are sent) will be graph theoretical. Indeed, let us introduce a binary tree which in the case of $n=3$ qubits takes the following form (with the obvious generalisation to more or fewer qubits):
\begin{center}
\refstepcounter{graph}\label{gr:graph}
\begin{tikzpicture}[x=1cm,y=1cm]
\node[qubit] (q1) at (0,0) {$1$};
\node[qubit] (q2) at (0,-1.45) {$2$};
\node[qubit] (q3) at (0,-2.90) {$3$};
\node[leaf] (a1) at (-2.1,0) {$a_1=X_1$};
\node[leaf] (b1) at (2.1,0) {$b_1=Y_1$};
\node[leaf] (a2) at (-2.1,-1.45) {$a_2=Z_1X_2$};
\node[leaf] (b2) at (2.1,-1.45) {$b_2=Z_1Y_2$};
\node[leaf] (a3) at (-2.1,-2.90) {$a_3=Z_1Z_2X_3$};
\node[leaf] (b3) at (2.1,-2.90) {$b_3=Z_1Z_2Y_3$};
\node[leaf] (p) at (0,-4.0) {$P=Z_1Z_2Z_3$};
\foreach \j in {1,2,3}{
 \draw[edge] (q\j)--node[port,above] {$X$} (a\j);
 \draw[edge,spectator] (q\j)--node[port,above] {$Y$} (b\j);
}
\draw[edge] (q1)--node[port,right] {$Z$} (q2);
\draw[edge] (q2)--node[port,right] {$Z$} (q3);
\draw[edge] (q3)--node[port,right] {$Z$} (p);

\node (p) at (8.,.-1.45)  {\textcolor{C4}{(\thegraph)}};
\node (p) at (-8.,.0) { };

\end{tikzpicture}
\end{center}
We emphasise that the orange edges are for our near-term purposes essentially fictitious, being merely along for the ride; the ``true'' leaves of the tree consist exactly of the $a_i$ and the parity operator $P=\prod_j Z_j$, and its edges are given exactly by those in black in the figure. Notice that to each leaf we associate an operator which is given by tracing a path from   the root to the leaf, applying along the way the Pauli recorded on an edge to the qubit which was last seen along the path. We will be interested in operating on these graphs via \textit{Whitehead moves}~\cite{rafi2011diameter}. A Whitehead move can be thought of as contracting an edge $e$ and then expanding it back out again in a different way; graphically, the possible moves look like (c.f. Figure 2 of Ref.~\cite{rafi2011diameter})

\begin{center}
   
\begin{tikzpicture}[
  scale=0.75,
  transform shape,
  x=1.65cm,y=1.2cm,
  wh-edge/.style={draw=black,line width=1.05pt},
  wh-junction/.style={circle,draw=black,fill=black!50!white,line width=0.9pt,inner sep=0pt,
    minimum size=4.4mm},
  wh-end/.style={circle,draw=black,line width=0.9pt,fill=blue!60!green!60!white,
    inner sep=0pt,minimum size=4.4mm},
  wh-label/.style={font=\Large,inner sep=3pt},
  wh-pairing/.style={font=\normalsize,inner sep=2pt},
  wh-arrow/.style={-{Stealth[length=3mm,width=2.2mm]},line width=1pt}
]
\begin{scope}
  \coordinate (u) at (-0.5,0);
  \coordinate (v) at (0.5,0);
  \coordinate (a) at (-1.35,0.85);
  \coordinate (b) at (-1.35,-0.85);
  \coordinate (c) at (1.35,0.85);
  \coordinate (d) at (1.35,-0.85);
  \draw[wh-edge] (a)--(u)--(b);
  \draw[wh-edge] (c)--(v)--(d);
  \draw[wh-edge] (u)--node[wh-label,below] {$e$}(v);
  \foreach \p in {u,v} \node[wh-junction] at (\p) {};
  \foreach \p in {a,b,c,d} \node[wh-end] at (\p) {};
  \node[wh-label,above=3mm] at (a) {$a$};
  \node[wh-label,below=3mm] at (b) {$b$};
  \node[wh-label,above=3mm] at (c) {$c$};
  \node[wh-label,below=3mm] at (d) {$d$};
  \node[wh-pairing] at (0,1.) {$\{a,b\}\mid\{c,d\}$};
\end{scope}

\draw[wh-arrow] (2.1,1.5)--(2.95,1.8);
\draw[wh-arrow] (2.1,-1.5)--(2.95,-1.8);

\begin{scope}[shift={(5,2)}]
  \coordinate (u1) at (-0.5,0);
  \coordinate (v1) at (0.5,0);
  \coordinate (a1) at (-1.35,0.85);
  \coordinate (c1) at (-1.35,-0.85);
  \coordinate (b1) at (1.35,0.85);
  \coordinate (d1) at (1.35,-0.85);
  \draw[wh-edge] (a1)--(u1)--(c1);
  \draw[wh-edge] (b1)--(v1)--(d1);
  \draw[wh-edge] (u1)--node[wh-label,below] {$e'$}(v1);
  \foreach \p in {u1,v1} \node[wh-junction] at (\p) {};
  \foreach \p in {a1,b1,c1,d1} \node[wh-end] at (\p) {};
  \node[wh-label,above=3mm] at (a1) {$a$};
  \node[wh-label,below=3mm] at (c1) {$c$};
  \node[wh-label,above=3mm] at (b1) {$b$};
  \node[wh-label,below=3mm] at (d1) {$d$};
  \node[wh-pairing] at (0,1.) {$\{a,c\}\mid\{b,d\}$};
\end{scope}

\begin{scope}[shift={(5,-2)}]
  \coordinate (u2) at (-0.5,0);
  \coordinate (v2) at (0.5,0);
  \coordinate (a2) at (-1.35,0.85);
  \coordinate (d2) at (-1.35,-0.85);
  \coordinate (b2) at (1.35,0.85);
  \coordinate (c2) at (1.35,-0.85);
  \draw[wh-edge] (a2)--(u2)--(d2);
  \draw[wh-edge] (b2)--(v2)--(c2);
  \draw[wh-edge] (u2)--node[wh-label,below] {$e''$}(v2);
  \foreach \p in {u2,v2} \node[wh-junction] at (\p) {};
  \foreach \p in {a2,b2,c2,d2} \node[wh-end] at (\p) {};
  \node[wh-label,above=3mm] at (a2) {$a$};
  \node[wh-label,below=3mm] at (d2) {$d$};
  \node[wh-label,above=3mm] at (b2) {$b$};
  \node[wh-label,below=3mm] at (c2) {$c$};
  \node[wh-pairing] at (0,-1.) {$\{a,d\}\mid\{b,c\}$};
\end{scope}
\end{tikzpicture}
\end{center}

Fixing the edge towards one of the vertices (say $a$) as pointing ``towards the root'' of the tree,  one of the subtrees  shown here is then determined   by which of the other vertices is ``partnered with'' $a$, with the Whitehead moves   changing who that partner is.
Now, we would like to define an action of the Clifford group on our tree. The rule is simple: to each leaf of the tree is associated a Pauli as described above; we let a Clifford act on the tree by acting (in the adjoint representation) on each leaf of the tree simultaneously. Let us see that   Whitehead moves correspond  exactly to the action of CNOT and Hadamard gates (acting on the qubits whose vertices are at the ends of the edge which is contracted and re-expanded)!   So, fix  two qubits, $u$ and $v$, who are currently neighbours in the tree. One of them (say $u$) will be higher up in the tree, with one of its two children being $v$, and the other some subtree (call it $D$). $v$ itself has two children, $E$ and $F$. In general, these  children descend from their parent qubit via either an ``$X$'' edge or a ``$Z$'' edge; let us assume that the descent from $u$ to $v$ is along the $Z$ edge (if it is not, we can simply act on the tree with $H_u$, a Hadamard on qubit $u$; this Hadamard will also introduce a phase on the corresponding spectator Majorana, which we ignore for now). Now, we can see directly that applying ${\rm CNOT}_{u\to v}$ acts on the relevant subtree as:
\begin{center}
    \begin{tikzpicture}[x=0.9cm,y=0.9cm]
\begin{scope}
\node[qubit] (u) at (0,0) {$u$};
\node[qubit] (v) at (1,-1.4) {$v$};
\node[leaf] (D) at (-1,-1.4) {$D$};
\node[leaf] (E) at (0.1,-2.8) {$E$};
\node[leaf] (F) at (1.9,-2.8) {$F$};
\node[leaf,text=orange!65!black] (yu) at (1.2,0.15) {$y_u$};
\node[leaf,text=orange!65!black] (yv) at (2.4,-1.25) {$y_v$};
\draw[edge,-{Stealth}] (u.north)--++(0,0.65)
  node[above,font=\scriptsize] {toward root};
\draw[edge] (u)--node[port,left] {$X$}(D);
\draw[edge] (u)--node[port,right] {$Z$}(v);
\draw[edge] (v)--node[port,left] {$X$}(E);
\draw[edge] (v)--node[port,right] {$Z$}(F);
\draw[edge,spectator] (u)--node[port,above] {$Y$}(yu);
\draw[edge,spectator] (v)--node[port,above] {$Y$}(yv);
\end{scope}
\draw[-{Stealth},thick] (3.0,-1.25)--node[above,font=\small] {${\rm CNOT}_{u\to v}$}(6.3,-1.25);
\begin{scope}[xshift=8.4cm]
\node[qubit] (v2) at (0,0) {$v$};
\node[qubit] (u2) at (-1,-1.4) {$u$};
\node[leaf] (D2) at (-1.9,-2.8) {$D$};
\node[leaf] (E2) at (-0.1,-2.8) {$E$};
\node[leaf] (F2) at (1,-1.4) {$F$};
\node[leaf,text=orange!65!black] (yv2) at (1.2,0.15) {$y_v$};
\node[leaf,text=orange!65!black] (yu2) at (-2.4,-1.25) {$y_u$};
\draw[edge,-{Stealth}] (v2.north)--++(0,0.65)
  node[above,font=\scriptsize] {toward root};
\draw[edge] (v2)--node[port,left] {$X$}(u2);
\draw[edge] (v2)--node[port,right] {$Z$}(F2);
\draw[edge] (u2)--node[port,left] {$X$}(D2);
\draw[edge] (u2)--node[port,right] {$Z$}(E2);
\draw[edge,spectator] (v2)--node[port,above] {$Y$}(yv2);
\draw[edge,spectator] (u2)--node[port,above] {$Y$}(yu2);
\end{scope}
\end{tikzpicture}
\end{center}
Indeed, note that
\begin{align*}
\Ad_{{\rm CNOT}_{u\to v}}(X_u)  &=X_uX_v\\
\Ad_{{\rm CNOT}_{u\to v}}(Y_u)   &=Y_uX_v\\
\Ad_{{\rm CNOT}_{u\to v}}(Z_uX_v)  &=Z_uX_v\\
\Ad_{{\rm CNOT}_{u\to v}}(Z_uY_v)  &=Y_v\\
\Ad_{{\rm CNOT}_{u\to v}}(Z_uZ_v)  &=Z_v
\end{align*}
To be clear about what   just happened, let us assume that we have only two qubits, and that $u,v,D,E$ and $F$ are respectively  $1,2,a_1,a_2$, and $P$ (seeing Graph~\ref{gr:graph}). Acting with ${\rm CNOT}_{1\to 2}$   then (for example) maps $D=a_1$ to a vertex which is linked to the root by $X_2$ and $X_1$, which is exactly how ${\rm CNOT}_{1\to 2}$ acts on $a_1=X_1$; one can check that all the other vertices are indeed mapped where they should be. So, we can implement either   Whitehead moves on a given pair of vertices by optionally applying $H_v$, and then applying a CNOT. \\

Next, by Theorem   3.1 of Ref.~\cite{rafi2011diameter}, we can (efficiently) find a sequence of length $\mco(\log n)$ of \textit{simultaneous Whitehead moves} that map the leaves of our tree to a tree whose leaves are ordered in any other sequence. Here a simultaneous Whitehead move is a set of Whitehead moves, no two of which involve the same vertex. In our setup, this translates to a set of CNOTs and Hadamards which can be performed in parallel (i.e., in constant depth). This means that, for a target permutation $\pi$ of the odd Majoranas, we can after a logarithmic number of steps modify Graph~\ref{gr:graph} so that the vertices on the left hand side are ordered according to $\pi$ (we in general need a final layer of Hadamards acting on some of the qubits to get all of the $X$ and $Z$ edges pointing in the manner of Graph~\ref{gr:graph}). In the course of our operations, we will have as a byproduct moved around in some fashion the ``qubit vertices'', who will now be ordered according to some (generically different) permutation $\sg$;  the fictitious $b$ vertices will have come along for the ride, being now also ordered according to $\sg$, potentially with some minus signs introduced by the various Hadamards.\\

Now, the permutation $\sg^{-1}$ can be applied (acting in the usual qubit-permuting  representation) in constant depth; indeed it is an elementary property of permutations that they can be written as the composition of two permutations, each of which consists only of transpositions (and fixed points). Under our assumption of all-to-all connectivity, then, this is just a depth-2 layer of two-qubit SWAPs. What is the point of doing that? Well, consider for example the $j$\textsuperscript{th} vertex on the left hand side of Graph~\ref{gr:graph} (after having done the Whitehead moves but before these final swaps). It now has a label which will read    $a_{\pi^{-1}(j)}$,   recording the fact that if one were to apply  to  $a_{\pi^{-1}(j)}$ the various sequence of CNOTs and Hadamards that constitute the applied Whitehead moves, one would obtain the Pauli $Z_{\sg(1)}Z_{\sg(2)}\cdots Z_{\sg(j-1)}X_{\sg(j)}$. But this is not what we want (indeed, in the Jordan-Wigner sense, it is probably not even a Majorana)! What we want is rather to map the $(\pi^{-1}(j))$\textsuperscript{th} odd Majorana to the $j$\textsuperscript{th} odd Majorana. But to do this, we simply need to  apply $\sg^{-1}$ in the qubit-permuting  representation. \\

At this point, we can apply in log-depth a permutation that maps the odd numbered Majoranas amongst themselves in a matter of our choosing, at the cost of permuting the even numbered Majoranas amongst themselves in an unknown way, and additionally multiplying some of them by minus one. A natural next step then, is to achieve a permutation amongst the odd Majoranas while leaving the even Majoranas untouched. To that end, we recall the elementary fact that any permutation may be written as the composition of two involutions, $\pi=\tau_1\tau_2$, with $\tau_1^2=\tau_2^2=e$, the identity permutation. For a target permutation $\pi\in \mathfrak{S}_n$, write such an involution as $\tau=\prod_r (u_r\,v_r)$, and choose a further permutation $\rho\in \mathfrak{S}_n$ such that $\rho(2r-1)=u_r$ and $\rho(2r)=v_r$ for all the appearing pairs $(u_r\,v_r)$; never mind where $\rho$ sends the fixed points of $\tau$. Next, introduce 
\begin{equation*}
K=\prod_r\exp\left(-\frac{\pi}{4}a_{2r-1}a_{2r}\right)=\prod_r\exp\left(\frac{i\pi}{4}Y_{2r-1}X_{2r}\right) ,
\end{equation*}
which  evidently may be implemented in constant depth. Finally, with $F_\rho$ the previously constructed log-depth circuit for implementing the permutation $\rho$ on the $a_i$ and acting in some unknown way  on the $b_i$, notice that the map
\begin{equation}
F_\rho K F_\rho^\dagger=\prod_r\exp\left(-\frac{\pi}{4}a_{u_r}a_{v_r}\right).\label{eq:inv}
\end{equation}
sends each $a_{u_r}\mapsto a_{v_r}$,  and 
$a_{v_r}\mapsto-a_{u_r}$, while fixing every other Majorana (in particular, every $b_j$ is fixed). So, we have (up to some known minus signs) implemented the involution $\tau$ in log depth; repeating for the second involution then yields $\pi$, while leaving every $b_j$  fixed. Similarly, we can get odd-family-fixing  even-family permutations. Indeed, as $S^{\otimes n}$ satisfies $S\tn a_j (S\tn)\ad = b_j$, and $S\tn b_j (S\tn)\ad = -a_j$, we can simply conjugate by $S$-gates and then run the odd permutation construction.  At last, consider an arbitrary permutation of the Majoranas. Some number, say $t$, of the odd Majoranas want to become even, and the same number of even Majoranas want to become odd. Begin by   mapping those parity-changing Majoranas to the first $t$ slots of their respective families, and then apply $S^{\ot t}$. At this point, every Majorana is in its correct family, and we can use the within-family construction to finish the compilation (up to minus signs). \\

Penultimately, let us correct the minus signs introduced by the action of the $S$-gates, and that of Eq.~\eqref{eq:inv} (notice that the minus signs introduced earlier by the Hadamard gates   all cancelled out). The final trick is to notice that conjugation by the operator $Pc_j$ (where recall $P=Z\tn$ is the parity operator) flips exactly the sign   of $c_j$, leaving all other Majoranas fixed. So, we can just consider the product of $Pc_j$ over all of the Majoranas whose signs we wish to flip; this is a single Pauli string, which can then be applied in constant depth, concluding the construction. \\

Finally, let us consider the effect of working on an arbitrary qubit connectivity graph. What we have shown so far is that, under
all-to-all connectivity, we have a router of depth $\log n$ (which is optimal by the usual lightcone arguments; indeed, consider a permutation which swaps $X_1$ with $Z_1Z_2\ldots Z_{n-1}X_n$). 
Now, from the work of Ref.~\cite{yuan2025full}, it follows that the compilation onto an arbitrary graph $G$ incurs only a factor of $\rtg$; moreover, the compilation process simply involves inserting 2-qubit SWAP gates in strategic positions. As SWAP gates are also Clifford, the claim follows.\\
\end{proof}

\section{The impossibility of polylog$(n,k)$-dependence}~\label{sec:kdep}

\noindent
In this appendix we establish the lower bound $D\geq \Om(k/\log(nk))$  on the depth of ancilla-free relative error matchgate $k$-designs claimed in Section~\ref{sec:discussion}. \\

\noindent
Let $v=\ket{0}\tn\in \mch^+$ be the highest-weight vector (of weight $\om=\tfrac 12 (1,1,\ldots ,1)$) in the even-parity irrep which occurs in the standard representation $\mch\cong \mch^+\oplus\mch^-\cong(\mbc^2)\tn$ of the matchgate group, and define, for some ensemble $\mu$ consisting of circuits of depth at most $D$,
\begin{align*}
\rho_\mu&=\expect_{U\sim\mu}\bigl(\ketbra{Uv}\bigr)^{\otimes k}\\
\rho_G&=\expect_{U\sim\mu_G}\bigl(\ketbra{Uv})^{\otimes k}.
\end{align*}
Recalling that $\mu$ being an $\varepsilon$-approximate relative error   $k$-design over a group $G$ implies $(1-\varepsilon)\Phi_{\mu_G}^{(k)}
 \preceq\Phi_\mu^{(k)}$ (and also an additional upper bound that we will not need here), we see that we must have (for $\varepsilon<1$) that $\rank(\rho_\mu)\geq \rank(\rho_G)$. Indeed, suppose that $u$ is in the kernel of $\rho_\mu$. Then we must have 
 \begin{equation*}
   0\leq  u\ad\big[   \rho_\mu - (1-\varepsilon)\rho_G \big]u = -(1-\varepsilon)u\ad \rho_G u,
 \end{equation*}
 so that by the positive semidefiniteness of  $\rho_G$ we must have $u\in\ker \rho_G$. So, $\ker \rho_\mu \subseteq \ker \rho_G$, whence $\rank(\rho_\mu)\geq \rank(\rho_G)$. 
 The idea of the proof will just be to compute these ranks, and see that the   inequality between them will  turn out to force a lower bound on $D$. Let us begin with $\rank(\rho_G)$. \\

 Now,   $v^{\otimes k}$ is a highest-weight vector of weight $k\omega$, whose  orbit spans the irrep $V_{k\omega}$, so that  Schur's lemma therefore gives
$\rho_G=\id_{V_{k\omega}}/\dim V_{k\omega}$. We then have~\cite{fulton1991representation}
\begin{equation}
\rank\rho_G =\dim V_{k\omega}=\prod_{1\leq i<j\leq n} \left(1+\frac{k}{2n-i-j}\right) \geq\left(1+\frac{k}{2n}\right)^{n(n-1)/2}. \label{eq:haar_rank}
\end{equation}

On the other hand, fix some element $U$ of the ensemble $\mu$, which consists of (say)  $s$ one- or two-qubit gates. Let us regard each gate's matrix entries as independent complex variables (of course unitarity imposes some constraints, we are being conservative for simplicity); there are then at most $16$ per gate. Every coordinate of $(Uv)^{\otimes k}$ is then a  homogeneous polynomial of degree $k$ in the entries of
each gate separately. Indeed, to be explicit, start with a single two-qubit gate $T=\sum_{a=1}^{16}x_a E_a$,  where the $E_a$ are the matrix units, and let $w_a=E_av$. Then
\begin{align*}
(Tv)^{\otimes k}&=\left(\sum_{a=1}^{16}x_aw_a\right)^{\otimes k}= \sum_{a_1,\ldots,a_k}x_{a_1}\cdots x_{a_k}\,w_{a_1}\otimes\cdots\otimes w_{a_k}= \sum_{\substack{\alpha_1,\ldots,\alpha_{16}\geq0\\\alpha_1+\cdots+\alpha_{16}=k}}x_1^{\alpha_1}\cdots x_{16}^{\alpha_{16}}\,W_\alpha,
\end{align*}
where $W_\alpha$ is the (independent of $T$) sum of the corresponding tensor products of the $w_a$. There are $B_k=\binom{k+15}{15}$ possible ``multiplicity vectors'' $\alpha$, so that for   a different two-qubit gate $T'$ (but still with the same support as $T$),  $(T'v)^{\otimes k}$  lies in the span of those same $B_k$  vectors. In the general case of $s$ gates, we similarly conclude that (having fixed $s$ locations at which the gates are to be applied) $(Uv)\tk=((T_s\cdots T_1)v)\tk$ always lies in the span of some   $B_k^s$  vectors, no matter the chosen $T_i$ in total. Now, there are $\mco(n^2)$ choices of support per gate, leading in total to a span of dimension at most $(n^2B_k)^{s}$; finally, a depth-$D$ circuit has at most $s=nD$ gates, so that 
\begin{equation}
\rank\rho_\mu\leq(n^2B_k)^{nD}.
\end{equation}
Comparing this to Eq.~\eqref{eq:haar_rank}, we find
\begin{equation*}
D\geq \frac{(n-1)\log(1+k/(2n))} {2\bigl(2\log n+\log\binom{k+15}{15}\bigr)}\geq \frac{(n-1)k} {6n\bigl(2\log n+\log\binom{k+15}{15}\bigr)}\geq \frac{k} {12\bigl(2\log n+15\log (k+1) \bigr)}=\Om(k/\log(nk)),
\end{equation*}
where we have used that $n-1\geq n/2$, $\binom{k+15}{15}\leq(k+1)^{15}$, and that for $k\leq n$, we have
$\log(1+k/(2n))\geq k/(3n)$.

\section{Further minutiae}\label{sec:fm}

\noindent
In this appendix we supply proofs of the lemmas yet unproved.

\leminfbound*
\begin{proof}
Recall from the main text that, with   $V=\mathbb C^{2n}$, we have that $\mcl\tk:=({\rm End}\,\mch)\tk\cong (\bigwedge V)\tk$ has weights which have integer coordinates between $-k$ and $k$.
It then follows from the general    representation theory of $\mbso(2n)$ that an irrep type $\sigma$ which occurs in $\mcl\tk$ has a highest weight $\lm$ with $k\geq\lambda_1\geq\cdots\geq\lambda_{n-1}\geq|\lambda_n|$~\cite{fulton1991representation}. Set $a=(\lambda_1,\ldots,\lambda_{n-1},|\lambda_n|)$, and define its \textit{degree} by $r=\sum_i a_i$ (so that $1\leq r\leq nk$ for every nontrivial irrep). Let $\Sigma_r$ consist of the distinct irrep types of degree $r$.
Now, every irrep type in $\Sigma_r$ certainly occurs in $V^{\otimes r}$; indeed, it corresponds to an $r$-box Young tableau, and thus (via the standard construction of orthogonal group irreps through \textit{Weyl functors}~\cite{fulton1991representation}) is a subspace of $V^{\otimes r}$. Since all the distinct types in $\Sigma_r$ occur in this same tensor power, we have
\begin{equation*}
\sum_{\sigma\in\Sigma_r}d_\sigma^2 \leq \left(\sum_{\sigma\in\Sigma_r}d_\sigma\right)^2 \leq \bigl(\dim V^{\otimes r}\bigr)^2 =({2n})^{2r},
\end{equation*}
which establishes  the first claim of the lemma.
For the second claim, we begin by recalling the  quadratic Casimir, which associates to an irrep   $\lm$ the value $\langle\lm,\lm+2\rho\rangle$, where $\rho $ is (half the) sum of the positive roots, and the inner product takes place in root space~\cite{fulton1991representation}. Equipping $V$ with the standard weights $\{\pm L_i\}_{i=1}^n$, the positive roots are $\{L_i\pm L_j\}_{i<j}$~\cite{fulton1991representation}, so that $\rho = \sum_{i=1}^n(n-i)L_i$, and (with $\sg\in\Sigma_r$)
\begin{equation*}
 C_2(\sigma)   =\sum_{i=1}^n a_i(a_i+2n-2i)=nr+\sum_{i=1}^n a_i(a_i-1)+\sum_{i<j}(a_i-a_j)\geq nr.
\end{equation*}
Finally, as in the main text, let $X_{ab}$ be the Hermitian generator of rotation in the $(a,b)$-plane. Its charges in $\mcl\tk$ are integers of absolute value at most $k$. With $P_{ab}$ reprising its role from the main text as the projector onto the kernel of $X_{ab}$, we therefore have $\id-P_{ab}\geq {X_{ab}^2}/{k^2}$ (which is  immediate from considering the action of these operators on the common eigenbasis of $X_{ab}$ and $P_{ab}$).
Now, as $C_2=\sum_{a<b}X_{ab}^2$, we can  average this  inequality over $a<b$ to  obtain
\begin{equation*}
    \id-S_\sigma \succeq \frac{C_2(\sigma)}{k^2\binom {2n}2}\id \succeq \frac{r}{k^2({2n}-1)}\id.
\end{equation*}
Finally, as $S_\sigma$ is an average of orthogonal projectors, it is positive semidefinite, whence
\begin{equation*}
\|S_\sigma\|\leq 1-\frac{r}{k^2({2n}-1)}\leq\exp\left(-\frac{r}{k^2({2n}-1)}\right),    
\end{equation*}
as required.
\end{proof}

\lemfourier*
\begin{proof}
This lemma is not actually particularly specific to the matchgate group, so we keep things general.  Indeed, let $G$ be a compact group with Haar measure $\mu$, and let
\begin{equation}
 \Ad_V \cong\bigoplus_{\sigma\in\Sigma}\pi_\sigma\otimes \id_{m_\sg}.
 \label{eq:ad_decomp}
\end{equation}
be the adjoint representation corresponding to operators acting on some vector space $V$. 
Here $\Sigma$ is the set of distinct irrep types, and we have   multiplicity spaces of dimension $m_\sigma$. Now, consider any (Hermiticity-preserving)
superoperator $A:{\rm End}\, V\to {\rm End}\, V$, that respects the  block diagonalisation of Eq.~\eqref{eq:ad_decomp}, i.e.,  $ A\cong\bigoplus_{\sigma\in\Sigma}
 A_\sigma\otimes \id_{m_\sg}$. We begin by showing that, with $\beta(A)=\sum_{\sigma\in\Sigma}d_\sigma\|A_\sigma\|_1$, we have
 \begin{equation}
 -\beta(A)\Phi_{\mu_G}\preceq A\preceq\beta(A)\Phi_{\mu_G}, \label{eq:beta_cp}
\end{equation}
where $\Phi_{\mu_G}=\int_{g\sim\mu}\Ad_V(g)$. Now, let us consider the Fourier coefficients
\begin{equation}
 f_A(g)=\sum_{\sigma\in\Sigma}
 d_\sigma\operatorname{tr}\left(A_\sigma\pi_\sigma(g)^\dagger\right),
\end{equation}
which as usual satisfy
\begin{equation}\label{eq:ft}
 A=\int_{g\sim \mu} f_A(g) \Ad_V(g).
\end{equation}
Now, for every Hermitian operator $X$, both $A(X)$ and $ \Ad_V(g)(X)$
are Hermitian.  Taking the Hermitian part of
Eq.~\eqref{eq:ft} and evaluating at $X$ therefore effectively replaces
$f_A(g)$ by its real part; as  every operator can be written as a complex linear
combination of Hermitian operators, we conclude that
\begin{equation}
 A=\int_{g\sim \mu}\operatorname{Re}[f_A(g)]\Ad_V(g).
\end{equation}
Next, we note   that by  
$\lvert\operatorname{tr}(BC)\rvert\leq\|B\|_1\|C\|$, we have $\forall g$ that $|f_A(g)|\leq\sum_{\sigma\in\Sigma}
 d_\sigma\|A_\sigma\|_1=\beta(A)$; 
consequently,
\begin{equation}\label{eq:cpdecomp}
 \beta(A)\Phi_{\mu_G}\pm A=\int_{g\sim \mu}\bigl(\beta(A)\pm\operatorname{Re}f_A(g)\bigr)\operatorname{Ad}_{V(g)},
\end{equation}
where the coefficients  are nonnegative (because
$\lvert\operatorname{Re}f_A(g)\rvert\leq\beta(A)$).  But then, for any ancilla space
$ K$ and any p.s.d. operator $Y\in {\rm End}(V\ot K)$ on the system
and $ K$, tensoring   Eq.~\eqref{eq:cpdecomp} with
$\id_{ K}$ expresses the image of $Y$ as an
integral of the p.s.d operators
$(V(g)\otimes \id_{ K})Y(V(g)^\dagger\otimes \id_{ K})$ with these nonnegative weights.  We conclude that both maps $\beta(A)\Phi_{\mu_G}\pm A$ are completely positive, establishing Eq.~\eqref{eq:beta_cp}. \\

Now, take $A=\Phi_{\nu^{*m}}-\Phi_{\mu_G}$. We have (recalling Eq.~\eqref{eq:blocks}) $A_{\id}=0$, and $A_\sigma=S_\sigma^m$ for $\sg\neq\id$. From the estimate $\|S_\sigma^m\|_1\leq d_\sigma\|S_\sigma^m\|\leq d_\sigma\|S_\sigma\|^m$ we see that the hypothesis of the lemma implies that $\b(A)\leq \delta$; adding $\Phi_{\mu_G}$ to all sides of the inequality chain in Eq.~\eqref{eq:beta_cp} then establishes the lemma.
\end{proof}

\lemwalkdepth*
\begin{proof}
This is a fairly simple matter of combining Lemmas~\ref{lem:inf_bound} and~\ref{lem:fourier}. Indeed, let
\begin{equation*}
z=4n^2\exp\!\left(-\frac{m}{k^2(2n-1)}\right);
\end{equation*}
then by Lemma~\ref{lem:inf_bound} we have
\begin{equation*}
\sum_{\substack{\sigma\in\Sigma\\\sigma\neq\mathbf1}}
 d_\sigma^2\|S_\sigma\|^m\leq \sum_{r=1}^{nk} (2n)^{2r} \exp\left(-\frac{rm}{k^2({2n}-1)}\right)=\sum_{r=1}^{nk} z^{r} \leq \frac{z}{1-z}.
\end{equation*}
Taking $m$ as in the hypothesis of the lemma then yields $z/(1-z)<\delta$, at which point we can appeal directly to Lemma~\ref{lem:fourier} to finish the proof.
\end{proof}

\lemparadepth*
\begin{proof}
Recall that we write the   random pairs of an $m$-step walk (in chronological order) as 
$e_1,\ldots,e_m$, and we assign to the 
 $i$\textsuperscript{th} step the level
\begin{equation}
 \ell_i=1+\max\{\ell_j:j<i,\ e_j\cap e_i\neq\{\}\},
\end{equation}
and put $D=\max_i\ell_i$. So, a gate having level at least $L$ means that we can find a chain $i_1<i_2<\cdots<i_L$, with $ e_{i_s}\cap e_{i_{s+1}}\neq\{\}$ for $1\leq s<L$. What is the probability of this? Well, take some arbitrary increasing list of indices $j_1,\ldots,j_L$. The probability that the gate at some index $j_s$ shares a Majorana in common with the gate at index $j_{s-1}$ is
\begin{equation*}
 p_{2n}=\frac{(2n-1) + (2n-1) - 1}{\binom {2n}2}
     =\frac{4n-3}{n(2n-1)},
\end{equation*}
and so the probability that the entire index tuple forms a chain is $p_{2n}^{L-1}$. There are $\binom mL$ choices of indices that could potentially form a chain, so that a union bound gives the overall probability to be bounded as 
\begin{equation*}
    \Pr[D\geq L]\leq \binom mL p_{2n}^{L-1} \leq p_{2n}^{-1}\left(\frac{em p_{2n}}{L}\right)^L,
\end{equation*}
where we have used $\binom mL\leq(em/L)^L$.  In particular, for $0<\eta<1$, let us make the choice
\begin{equation*}
 L=\left\lceil\max\left\{ 2em p_{2n},\ \log_2\frac{n}{\eta}\right\}\right\rceil.
\end{equation*}
Then
\begin{equation*}
    \Pr[D\geq L]\leq p_{2n}^{-1}\left(\frac{em p_{2n}}{2emp_{2n}}\right)^L = p_{2n}^{-1}2^{-L}\leq n2^{-L}\leq \eta,
\end{equation*}
where we have used that $n^{-1}\leq p_{2n}\to2n^{-1} $. 

\end{proof}

\lemmajodoubled*
\begin{proof}
Let us recall the channels of the various ensembles that are floating around at this point. We have $\Phi_{\mu_G}$, the channel corresponding to the true Haar distribution on matchgates, $M=\Phi_{\nu^{*m}}$, that of our $m$-step walk, and $T=\Phi_{\nu_{m,L}}$, the $m$-step walk channel conditioned on walks of  less than $L$ layers.  Let us also introduce $R$, the channel corresponding to walks conditioned on having  at least $L$ layers, so that 
\begin{equation}
  (1+\delta)\Phi_{\mu_G} \succeq  M = (1-r) T + rR \succeq (1-r) T\succeq (1-\eta) T,
\end{equation}
where $r$ is the exact probability of the walk being too long, and $\eta$ is from Lemma~\ref{lem:para_depth}, and we have  used that $\nu^{*m}$ is a $\delta$-approximate relative error design.
Then, using $\delta=\eta=\varepsilon/3$ and $0<\varepsilon\leq1$, we have
\begin{equation*}
T\preceq\frac{1+\varepsilon/3}{1-\varepsilon/3}\Phi_{\mu_G} \preceq(1+\varepsilon)\Phi_{\mu_G}, 
\end{equation*}
so that $\Delta=(1+\varepsilon)\Phi_{\mu_G}-T$ is completely positive. As compositions of completely positive maps are completely positive, we then have (using $\Phi_{\mu_G}^2=\Phi_{\mu_G} T = T\Phi_{\mu_G} = \Phi_{\mu_G}$) that
\begin{equation*}
    0 \preceq \Delta^2=T^2-(1-\varepsilon^2)\Phi_{\mu_G}
\end{equation*}
Combined with $0\preceq T\Delta=(1+\varepsilon)\Phi_{\mu_G}-T^2$, we conclude
\begin{equation*}
(1-\varepsilon)\Phi_{\mu_G}\preceq (1-\varepsilon^2)\Phi_{\mu_G} \preceq T^2 \preceq (1+\varepsilon)\Phi_{\mu_G}, 
\end{equation*}
i.e., $T^2$ is an $\varepsilon$-approximate relative error design. 
As $T^2$ is precisely the moment channel of the concatenation of two independently sampled accepted walks, i.e., that of $\mu_{m,L}$, we are done.
\end{proof}

\end{document}